\definecolor{rltblue}{rgb}{0,0,0.75}
\newcommand{\F}{\mathbf{F}}
\newcommand{\ord}{{\rm {ord}}}
\newtheorem{theorem}{\textbf{Theorem}}
\newtheorem{lemma}[theorem]{\textbf{Lemma}}
\newtheorem{defn}[theorem]{\textbf{Definition}}
\newcommand{\ket}[1]{|#1\rangle}
\newcommand{\nix}[1]{}
\begin{document}
\title{A Class of Quantum LDPC Codes Constructed From Finite Geometries}
\author{
\authorblockN{Salah A. Aly\\}
\authorblockA{Department  of Computer Science, Texas A\&M University\\
College Station, TX 77843, USA \\
Email: salah@cs.tamu.edu }
 }  \maketitle

\begin{abstract}
Low-density parity check (LDPC) codes are a significant class of
classical codes with many applications. Several good LDPC codes have
been constructed using random, algebraic, and finite geometries
approaches, with containing cycles of length at least six in their
Tanner graphs. However, it is impossible to design a self-orthogonal
parity check matrix of an LDPC code without introducing cycles of
length four.

In this paper, a new class of quantum LDPC codes based on lines and
points of finite geometries is constructed. The parity check
matrices of these codes are adapted to be self-orthogonal with
containing only one cycle of length four in each pair of two rows.
Also, the column and row weights, and bounds on the minimum distance
of these codes are given. As a consequence, these codes can be
encoded using shift-register encoding algorithms and can be decoded
using iterative decoding algorithms over various quantum
depolarizing channels.
\end{abstract}
\nix{
\begin{keywords}
LDPC Codes, BCH  Codes, Channel Coding, Performance and iterative decoding.
\end{keywords}
}

\section{Introduction}\label{sec:intro}
Low density parity check (LDPC) codes are a
capacity-approaching~(\emph{Shannon limit}) class of codes that were
first described in a seminal work by Gallager~\cite{gallager62}. In
Tanner~\cite{tanner81}, LDPC codes were rediscovered and presented
in a graphical interpretation~(\emph{codes over graphs}).  Iterative
decoding of LDPC and turbo codes highlighted the importance of these
classes of codes for communication and storage channels.
Furthermore, they have been used extensively in many
applications~\cite{macKay98,lin04,liva06}.

 There have been several notable attempts to construct regular and irregular
good LDPC codes using algebraic combinatorics and random
constructions, see~\cite{song06,liva06}, and references
therein. Liva~\emph{et al.}~\cite{liva06} presented a survey of the
previous work done on algebraic constructions of LDPC codes based on
finite geometries, elements of finite fields, and RS codes.
Furthermore, a good construction of LDPC codes should have a girth
of the Tanner graph, of at least six~\cite{liva06,lin04}.

Quantum information is sensitive to noise and needs error
correction, control, and recovery strategies. Quantum block and
convolutional codes are means to protect quantum information against
noise and decoherence. A well-known class of quantum codes is called
stabilize codes, in which it can be easily constructed using
self-orthogonal (or dual-containing) classical codes,
see~\cite{calderbank98,aly07a,ketkar06} and references therein.
 Recently, subsystem codes combine the features of decoherence free subspaces,
noiseless subsystems, and quantum error-correcting codes,
see~\cite{aly06c,bacon06,kribs05b,lidar98} and references
therein.

Quantum block LDPC codes have been proposed
in~\cite{postol01,macKay04}. MacKay \emph{et al.} in~\cite{macKay04}
constructed sparse graph quantum LDPC codes based on cyclic matrices
and using a computer search. Recently, Camera \emph{el al.} derived
quantum LDPC codes in an analytical method~\cite{camara05}. Hagiwara
and Imai constructed quasi-cyclic (QC) LDPC codes and derived  a
family of quantum QC LDPC codes from a nested pair of classical
codes~\cite{hagiwara07}.

 In this paper, we construct LDPC codes based on finite
geometry. We show that the constructed LDPC codes have quasi-cyclic
structure and their parity check matrices can be adapted to satisfy
the  self-orthogonal (or dual-containing) conditions. The
motivations for this work are that \begin{inparaenum}[(i)] \item
LDPC codes constructed from finite geometries can be encoded using
linear shift-registers. The column weights remain fixed with the
increase in  number of rows and length of the code.
\item The adapted parity check matrix has exactly one cycle with length four between any two rows and many cycles with length of at least six. \item A class of quantum LDPC codes is constructed that can be decoded using
known iterative decoding algorithms over quantum depolarizing
channels; some of these algorithms are stated in~\cite{poulin07}.
\end{inparaenum}

 \emph{Notation:} Let $q$ be a prime power
$p$ and $\F_q$ be a finite field with $q$ elements. Any two binary
vectors $\textbf{v}=(v_1,v_2,\ldots,v_n)$ and
$\textbf{u}=(u_1,u_2,\ldots,u_n)$ are orthogonal if their inner
product vanishes, i.e., $\sum_{i=1}^n v_iu_i \mod 2=0$. Let
$\textbf{H}$ be a parity check matrix defined over $\F_2$, then
\textbf{H} is self-orthogonal if the inner product between any two
arbitrary rows of \textbf{H} vanishes.

\section{LDPC Code Constructions and Finite Geometries}
\subsection{LDPC Codes}
\begin{defn} An $(\rho, \lambda)$ regular LDPC code is defined by a sparse
binary parity check matrix $\textbf{H}$ satisfying the following
properties.
\begin{compactenum}[i)]
\item $\rho$ is the number of one's in a column.
\item $\lambda$ is the number of one's in  a row.
\item Any two rows have at most one nonzero element in common. The code does not have cycles of length four in its Tanner graph.
\item $\rho$ and $\lambda$ are small in comparison to the number of rows and length of
the code. In addition, rows of the matrix \textbf{H} are not
necessarily linearly independent.
\end{compactenum}
\end{defn}

The third condition guarantees that  iterative decoding algorithms
such as sum-product or message passing perform well over
communication channels. In general it is hard to design regular LDPC
satisfying the above conditions, see
\cite{song06,liva06,lin04} and references therein.
\subsection{Finite Geometry}
Finite geometries can be classified into Euclidean and projective
geometry over finite fields.  Finite geometries codes are an
important class of cyclic and quasi-cyclic codes because their
encoder algorithms can be implemented using linear feedback shift
registers and their decoder algorithms can be implemented using
various decoding algorithms such as majority logic (MLG),
sum-product (SPA), and weighted BF, see~\cite{kou01,liva06,lin04}.

\begin{defn}
A finite geometry with  a set of $n$ points $\{p_1,p_2,\dots,p_n\}$,
a set of $l$ lines $\{L_1,L_2,\ldots,L_l\}$ and an integer pair
$(\lambda,\rho)$ is defined as follows:
\begin{compactenum}[i)]
\item Every line $L_i$ passes through $\rho$ points.
\item Every point $p_i$ lies in $\lambda$ lines, i.e., every point
$p_i$ is intersected by $\lambda$ lines.
\item Any two points $p_1$ and $p_j$ can define one and only one line $L_k$ in between.
\item Any two lines $L_i$  and $L_j$ either intersect at only one point $p_i$ or they are parallel.
\end{compactenum}
\end{defn}

Therefore, we can form a binary matrix $\textbf{H}=[h_{i,j}]$ of
size $l \times n$ over $\F_2$. The rows and columns of \textbf{H}
correspond the $l$ lines and $n$ points in the Euclidean geometry,
respectively. If the i\emph{th} line $L_i$ passes through the point
$p_i$ then $h_{i,j}=1$, and otherwise $h_{i,j}=0$
\begin{figure}[h]
  \includegraphics[scale=0.7]{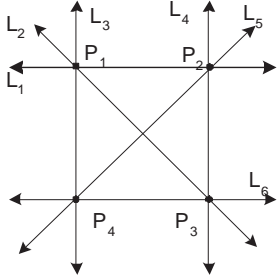}
 \centering
  \caption{Euclidean geometry with points $n=4$ and lines $l=6$}\label{ldpc1}
\end{figure}
Fig.~\ref{ldpc1} shows an example of Euclidean geometry with $n=4$,
$l=6$, $\lambda=3$, and $\rho=2$. We can construct the incidence
matrix $\textbf{H}$ based on this geometry where every point and
line correspond to a column and row, respectively.  For $\rho << l$
and $\lambda <<n$, The matrix $\textbf{H}$ is a sparse low density
parity check matrix. In this example, the matrix $\textbf{H}_{EG-I}$
is given by

\begin{eqnarray}
\textbf{H}_{EG-I}=\left( \begin{array}{cccc}1&1&0&0 \\1&0&1&0 \\ 1&0&0&1 \\ 0&1&1&0\\
0&1&0&1\\ 0&0&1&1 \end{array} \right)
\end{eqnarray}

We call the Euclidean geometry defined in this type as a
$\textbf{Type-I EG}$. The Tanner graph of \textbf{Type-I EG}  is a
regular bipartite graph with $n$ code variable vertices and $l$
check-sum vertices. Also, each variable bit vertex has degree
$\lambda$ and each check-sum has degree $\rho$.

If we can take the transpose of this matrix $\textbf{H}_{EG-I}$,
then we can also define a $(\rho,\lambda)$ LDPC code with length $l$
and minimum distance is at least $\rho+1$. The codes defined in this
type are called LDPC codes based on $\textbf{Type-II EG}$. In this
type, any two rows intersect at exactly one position.

\begin{figure}[t]
  \includegraphics[scale=0.7]{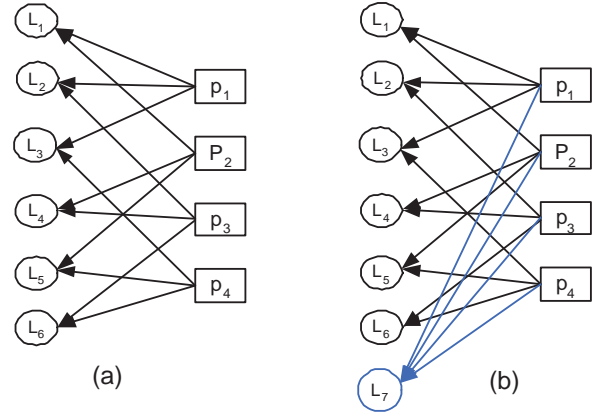}
 \centering
  \caption{(a) Euclidean geometry with $n=4$ points (check-sum) and $l=6$ lines (code variables) (b) This Tanner graph corresponds to a self-orthogonal parity check
matrix.}\label{ldpc2}
\end{figure}

\subsection{Adapting the Matrix $\textbf{H}_{EG-II}$ to be Self-orthogonal}
Let $\textbf{H}_{EG-II}$  be  a parity check matrix of a regular
LDPC code constructed based on \textbf{Type-II EG} Euclidean
geometry. We can construct a self-orthogonal matrix
$\textbf{H}_{EG-II}^{orth}$ from $\textbf{H}_{EG-II}$ in two cases.

\textbf{Case 1.} If the number of one's in a row is odd and any two
rows intersect at exactly one position, i.e., any line connects two
points.  As shown in Fig.~\ref{ldpc2}, the Tanner graph corresponds
to a self-orthogonal parity check matrix $\textbf{H}_{EG-II}^{orth}$
if and only if every check-sum has even degree and any any two
check-sum nodes meet at even code variable nodes. This condition is
the same as every row in the parity check matrix
$\textbf{H}_{EG-II}^{orth}$ has an even weight and any two rows
overlap in even nonzero positions.

\begin{small}
\begin{eqnarray}
\textbf{H}_{EG-II}^{orth}&=&\Big(\begin{array}{c|c}
\textbf{H}^T&\textbf{1}
\end{array} \Big)\end{eqnarray}
\end{small}
The vector $\textbf{1}$ of length $n$ is added as the last column in
$\textbf{H}_{EG-II}^{orth}$.

 \textbf{Case 2.} Assume the number of
one's in a line is even and any two rows intersect at exactly one
position. We can construct a self-orthogonal parity check matrix
$\textbf{H}_{EG-II}^{orth}$ as follows. We add the vector
$\textbf{1}$ along with the identity matrix \textbf{I} of size
$n\times n$. We guarantee that  any two rows of the matrix
$\textbf{H}_{EG-II}^{orth}$ intersect at two nonzero positions and
every row has an even weight.
\begin{eqnarray}
\textbf{H}_{EG-II}^{orth}&=&\left(\begin{array}{c|c|c}
\textbf{H}^T&\textbf{1}&\textbf{I}\end{array} \right).
\end{eqnarray}

\subsection{Characteristic Vectors and Matrices}

Let $n$ be a positive integer such that $n=q^m-1$, where
$m=\ord_n(q)$ is the multiplicative order of $q$ modulo $n$. Let
$\alpha$ denote a fixed primitive element of~$\F_{q^m}$. Define a
map $\textbf{z}$ from $\F_{q^m}^*$ to $\F_2^n$ such that all entries
of $\textbf{z}(\alpha^i)$ are equal to 0 except at position $i$,
where it is equal to 1. For example,
$\textbf{z}(\alpha^2)=(0,1,0,\ldots,0)$.  We call
$\textbf{z}(\alpha^k)$ the location (or characteristic) vector of
$\alpha^k$. We can define the location vector
$\textbf{z}(\alpha^{i+j+1})$ as the right cyclic shift of the
location vector $\textbf{z}(\alpha^{i+j})$, for $0 \leq j \leq n-1$,
and the power is taken module $n$. The location vector can be
extended to two or more nonzero positions. for example, the location
vector of $\alpha^2$, $\alpha^3$ and $\alpha^5$ is given by
$\textbf{z}(\alpha^2,\alpha^3,\alpha^5)=(0,1,1,0,1,0,\ldots,0)$.

\begin{defn}\label{def:Amatrix}We can define a map $A$ that associates to an element $\F_{q^m}^*$ a circulant
matrix in $\F_2^{n\times n}$ by
\begin{eqnarray}\label{label:mapA} A(\alpha^i)=\left ( \begin{array}{ccc}
\textbf{z}(\alpha^i) \\  \textbf{z}(\alpha^{i+1})
\\ \vdots \\  \textbf{z}(\alpha^{i+n-1})
\end{array} \right).
\end{eqnarray}
By construction, $A(\alpha^k)$ contains a 1 in every row and column.
\end{defn}

We will use the map $A$ to associate to a parity check matrix
$H=(h_{ij})$ in $(\F_{q^m}^*)$ the (larger and binary) parity check
matrix $\textbf{H}=(A(h_{ij}))$ in $\F_2^{n \times n}$. The matrices
$A(h_{ij})$$'s$ are $n \times n$ circulant permutation matrices
based on some primitive elements $h_{ij}$ as shown in
Definition~\ref{def:Amatrix}.

\section{Constructing Self-Orthogonal Cyclic LDPC Codes from Euclidean
Geometry}\label{sec:LDPCcodes}
In this section we construct self-orthogonal algebraic Low Density
Parity Check (LDPC) codes based on finite   geometries. Particulary,
there are two important classes of finite geometries: Euclidean and
projective geometry.

\subsection{Euclidean Geometry $EG(m,q)$}
We construct regular LDPC codes based on lines and points of
Euclidean geometry. The class we derive has a cyclic structure, so
it is called cyclic LDPC codes. Cyclic LDPC codes can be defined by
a sparse parity check matrix or by a generator polynomial and can be
encoded using shift-register. Furthermore, they can be decoded using
well-known iterative decoding algorithms~\cite{lin04,liva06}.

Let $q$ be power of a prime $p$, i.e. $q=p^s$ for some integer
$s\geq 2$. Let $EG(m,q)$ be the m-dimensional Euclidean geometry
over $\F_q$ for some integer $m \geq 2$. It consists of $p^{ms}=q^m$
points and every point is represented by an m-tuple,
see~\cite{kou01}. A line in $EG(m,q)$ can be described by a
$1$-dimensional subspace of the vector space of all $m$-tuples over
$\F_q$ or a coset of it. The number of lines in $EG(m,q)$ is given
by \begin{eqnarray}(q^{m-1})(q^{m}-1)/(q-1),\end{eqnarray} and each
line passes through $q$ points. Every line has $q^{(m-1)}-1$ lines
parallel to it. Also, for any point in $EG(m,q)$, there are
\begin{eqnarray}(q^{m}-1)/(q-1),\end{eqnarray} lines intersect at this point. Two lines can
intersect at only one point or they are parallel.

Let $\F_{q^m}$ be the extension field of $\F_q$. We can represent
each element in $\F_{q^m}$ as an $m$-tuple over $\F_q$. Every
element in the finite field $\F_{q^m}$ can be looked as  a point in
the Euclidean geometry $EG(m,q)$, henceforth $\F_{q^m}$ can be
regarded as the Euclidean geometry $EG(m,q)$.

Let $\alpha$ be a primitive element of $\F_{q^m}$. $q^m$ points of
$EG(m,q)$ can be represented by elements of the set $\{
0,1,\alpha,\alpha^2,\ldots,\alpha^{q^m-2} \}$. We can also define a
line $L$ as the set of points of the form $\{ \textbf{a}+\gamma
\textbf{ b} \mid \gamma \in \F_q \}$, where $\textbf{a}$ and
$\textbf{b}$ are linearly independent over $\F_{q}$. For a given
point $\textbf{a}$, there are $(q^m-1)/(q-1)$ lines in $EG(m,q)$
that intersect at $\textbf{a}$.

\textbf{Type-I EG.} Let $n=q^m-1$ be the number of points excluding
the original point $\textbf{0}$ in $EG(m,q)$. Assume $L$ be a line
not passing through $\textbf{0}$. We can define the binary vector
\begin{eqnarray}\textbf{v}_L=(v_1,v_1,\ldots,v_{n}),\end{eqnarray}
 where $v_i=1$ if the
point $\alpha^i$ lies in a line $L$.  The vector $\textbf{v}_L$ is
called the incidence vector of $L$. Elements of the vector
$\textbf{v}_L$ correspond to the elements
$1,\alpha,\alpha^2,\ldots,\alpha^{n-1}$. $\alpha L$ is also a line
in $EG(m,q)$, therefore $\alpha \textbf{v}_L$ is a right
cyclic-shift of the vector $\textbf{v}_L$. Clearly, the lines
$L,\alpha L,\ldots,\alpha^{n-1}L$ are all different. But, they  may
not be linearly independent.

Consider the vectors $L_i,\alpha L_i,\ldots,\alpha^{n-1} L_i$. We
can construct an $n \times n$ matrix $H_i$ in the form

\begin{eqnarray} H_i=\left ( \begin{array}{cccccc} \textbf{v}_{L_i} \\ \alpha
\textbf{v}_{L_i}\\ \vdots \\ \alpha^{n-1}\textbf{v}_{L_i}
\end{array} \right)
\end{eqnarray}
Clearly, $H_i$ is a circulant matrix with column and row weights
equals to $q$, the number of points that lie in a  line $\alpha^j
L_i$, for $0 \leq j \leq n-1$. $H_i$ has size of $n \times n$. The
total number of lines   in $EG(m,q)$ that do not pass through the
origin $\textbf{0}$ are given by
\begin{eqnarray}(q^{m-1}-1)(q^m-1)/(q-1)\end{eqnarray}
They can be partitioned into $(q^{m-1}-1)/(q-1)$ cyclic classes,
see~\cite{liva06}. Every class $\mathcal{H}_i$ can be defined by an
incidence vector $L_i$ as $\{L_i,\alpha L_i,\alpha^2
L_i,\ldots,\alpha^{n-1} L_i\}$ for $1 \leq i \leq
(q^{m-1}-1)/(q-1)$. Let $1 \leq \ell \leq (q^{m-1}-1)/(q-1)$, then
$\mathcal{H}_{EG,\ell}$ is defined as

\begin{eqnarray} \mathcal{H}_{EG,\ell}= \Big[ \begin{array}{cccccc} \mathcal{H}_1 & \mathcal{H}_2 &
\ldots & \mathcal{H}_\ell
\end{array} \Big]^T.
\end{eqnarray}

For each cyclic class $\mathcal{H}_i$, we can form the matrix
$\mathbf{H}_i$ over $\F_2$ of size $n \times n$. Therefore,
$\mathbf{H}_i$ is a circulant binary matrix of row and column
weights of q.

If we assume that there are $1\leq \ell \leq (q^{m-1}-1)/(q-1)$
incidence lines in $EG(m,q)$ not passing through the origin, then we
can form the binary matrix

\begin{eqnarray} \textbf{H}_{EG,\ell}=\Big [ \begin{array}{cccccc} \textbf{H}_1 &
\textbf{H}_2&
\ldots& \textbf{H}_\ell
\end{array} \Big]^T.
\end{eqnarray}

The matrix $\textbf{H}_{EG,\ell}$ consists of a $\ell$ sub-matrices
$\textbf{H}_i$ of size $n \times n$ and it has column and row
weights $\ell q$ and $q$, respectively. The null space of the matrix
$\textbf{H}_{EG,\ell}$ gives a cyclic EG-LDPC code of length
$n=q^m-1$ and minimum distance $\ell q+1$, whose Tanner graph has a
girth of at least six, see~\cite{song06,liva06}.

The Tanner graph of \textbf{Type-I EG}  is a regular bipartite graph
with $q^m-1$ code variable vertices and $l$ check-sum vertices.
Also, Each variable bit vertex has degree $\rho=q$ and each
check-sum has degree $\lambda=\ell q$.


\textbf{Type-II EG.} We can take the transpose of the parity check
matrix $\mathcal{H}_{(EG,\ell)}$ over $\F_{q^m}$ as defined  in
$\textbf{Type-I}$ to define a new parity check matrix with the
following properties, see~\cite{kou01}.
\begin{eqnarray} \mathcal{H}_{EG,\ell}^T= \Big[ \begin{array}{cccccc} \mathcal{H}_1^T & \mathcal{H}_2^T&
\ldots& \mathcal{H}_\ell^T
\end{array} \Big]
\end{eqnarray}
So, the matrix $\mathcal{H}_i^T$ is the transpose matrix of
$\mathcal{H}_i$. Consequently, we can define the binary matrix
$\textbf{H}_{EG,\ell}$

\begin{eqnarray} \textbf{H}_{EG,\ell}^T=\Big [ \begin{array}{cccccc} \textbf{H}_1^T & \textbf{H}_2^T&
\ldots& \textbf{H}_\ell^T
\end{array} \Big].
\end{eqnarray}

Let $\ell=(q^{m-1}-1)/(q-1)$, then the matrix
$\textbf{H}_{EG,\ell}^T$ has the following properties
\begin{compactenum}[i)]
\item
The total number of columns is given by $\ell n=
(q^{m-1}-1)(q^m-1)/(q-1)$.

\item Number of rows is given by $n=q^m-1$.
\item The rows of this matrix correspond to the nonorigin points of $EG(m,q)$
and the columns correspond to the lines in $EG(m,q)$ that do not
pass through the origin. \item $\lambda=\ell q=
q(q^{m-1}-1)/(q-1)=(q^m-1)/(q-1)-1$ is the row weight for $\ell =
(q^{m-1}-1)/(q-1)$. Also $\rho=q$ is the column weight.
\item Any two rows  in $\textbf{H}_{EG,\ell}^T$ have exactly one nonzero element in common. Also, any two
columns have at most one nonzero element in common.
\item The binary sub-matrix $\textbf{H}_i^T$ has size $(q^m-1)\times (q^m-1)$. Also, it can be
constructed using only one vector $\textbf{v}_L$ that will be
cyclically shifted $q^m-1$ times.
\end{compactenum}

\subsection{QC LDPC Codes} The matrix $\textbf{H}^{T}_{EG,\ell}$
defines a quasi-cyclic (QC) LDPC code of length $N=\ell
n=(q^{(m-1)}-1)(q^m-1)/(q-1)$ for $\ell=(q^{m-1}-1)/(q-1)$. The
matrix $\textbf{H}^{T}_{EG,\ell}$ has $ n=q^m-1$ rows that are not
necessarily independent. We can define a QC LDPC code over $\F_2$ as
the null-space of the matrix $\textbf{H}^{T}_{EG,\ell}$ of sparse
circulant sub-matrices of equal size. The matrix
$\textbf{H}^{T}_{EG,\ell}$ with parameters $(\rho, \lambda)$ has the
following properties.
\begin{compactenum}[i)]
\item $\rho=q$ is the weight of a column $c_i$. $\rho$ does not
depend on $m$, hence length of the code can be increased without
increasing the column weight.
\item $\lambda=\ell q$ is the weight of a row $r_i$. $\lambda$
depends on $m$, but the length of the code increases much faster
than $\lambda$.
\item Every  two columns intersect at most at one nonzero position. Every two rows  have exactly one and only one nonzero
position in common.
\end{compactenum}

 From this definition, the minimum distance of the LDPC code defined by the null-space of
$\textbf{H}^{T}_{EG,\ell}$ is at least $\rho+1$. This is because we
can add at least $\rho+1$ columns in the parity check matrix
$\textbf{H}^{T}_{EG,\ell}$ to obtain the zero column (rank of
$\textbf{H}^{T}_{EG,\ell}$ is at least $(\rho+1$)). Furthermore, the
girth of the Tanner graph for this matrix $\textbf{H}_i$ is at least
six, see~\cite{macKay98,song06}. This is a $(\rho,\lambda)$ QC LDPC
code based on $\textbf{Type-II EG}$.

\subsection{Self-orthogonal QC LDPC
Codes}\label{sec:LDPCcodesorthogonal}

We can define a self-orthogonal parity check matrix
$\textbf{H}^{orth}_{EG,\ell}$ from $\textbf{Type-II EG}$
construction as follows. The binary matrix
$\textbf{H}^{T}_{EG,\ell}$ of size $n \times \ell n$ for $1 \leq
\ell \leq (q^{m-1}-1)/(q-1)$  has row and column weights of
$\lambda=\ell q$ and $\rho=q$, respectively. Let \textbf{1} be the
column vector of size $(q^m-1) \times 1$ defined as
$\textbf{1}=(1,1,\ldots,1)^T$. If the weight of a row in
$\textbf{H}^{T}_{EG,\ell}$ is odd, then we can add the vector
\textbf{1} to form the matrix $\textbf{H}^{orth}_{EG,\ell}=\Big[
\textbf{H}_{EG,\ell}^T  \mid \textbf{1} \Big]$. Also, if the weight
of a row in $\textbf{H}^{T}_{EG,\ell}$ is even, then we can add the
vector \textbf{1} along with the identity matrix of size
$(q^m-1)\times (q^m-1)$ to form $\textbf{H}^{orth}_{EG,\ell}=\Big[
\textbf{H}_{EG,\ell}^T  \mid \textbf{1} \mid \textbf{I} \Big]$.
Therefore, we can prove that $\textbf{H}^{orth}_{EG,\ell}$ is
self-orthogonal as shown in the following Lemma.

\begin{lemma}\label{lem:Hself-orthogonal}
The parity check matrix $\textbf{H}^{orth}_{EG,\ell}$ defined as
\begin{eqnarray}\textbf{H}^{orth}_{EG,\ell} =\left\{
  \begin{array}{ll}
    \Big[\begin{array}{cccc|c}\textbf{H}_1^T&\textbf{H}_2^T&\dots&\textbf{H}_\ell^T
&\textbf{1}
\end{array}\Big], \mbox{for odd $\ell q$;}  \\ \hspace{0.2cm} \\
    \Big[\begin{array}{cccc|c|c}\textbf{H}_1^T&\textbf{H}_2^T&\dots&\textbf{H}_\ell^T
&\textbf{1}&\textbf{I}
\end{array}\Big],  \mbox{for even $\ell q$} \nonumber
  \end{array}
\right.
\end{eqnarray}
is self-orthogonal.
\end{lemma}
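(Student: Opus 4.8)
The plan is to verify self-orthogonality directly from the definition: I must show that for any two rows $\mathbf{r}$ and $\mathbf{s}$ of $\textbf{H}^{orth}_{EG,\ell}$, the binary inner product $\mathbf{r}\cdot\mathbf{s} \bmod 2$ vanishes, and also that each row is orthogonal to itself (i.e., has even weight). I would split into the two cases exactly as the statement does, and within each case split the row-pair computation into contributions from the $\textbf{H}^T$-block, the $\textbf{1}$-column, and (in the even case) the $\textbf{I}$-block. The key structural fact I will lean on, established earlier in the excerpt for \textbf{Type-II EG} with $\ell=(q^{m-1}-1)/(q-1)$, is that any two distinct rows of $\textbf{H}^{T}_{EG,\ell}$ share \emph{exactly one} nonzero position, and that each row of $\textbf{H}^{T}_{EG,\ell}$ has weight $\lambda=\ell q$.

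First, the self-inner-product (diagonal) case. A row of $\textbf{H}^{orth}_{EG,\ell}$ has weight $\lambda=\ell q$ coming from the $\textbf{H}^T$-block, plus $1$ from the appended $\textbf{1}$-column, plus (in the even case) exactly $1$ more from the appended identity block. So in the odd-$\ell q$ case the total row weight is $\ell q + 1$, which is even; in the even-$\ell q$ case the total is $\ell q + 1 + 1 = \ell q + 2$, again even. Hence every row is orthogonal to itself in both cases. Second, the off-diagonal case for two distinct rows $i\neq j$. The overlap within the $\textbf{H}^T$-block is exactly $1$ by the \textbf{Type-II EG} property. The $\textbf{1}$-column contributes $1$ to the overlap (both rows have a $1$ there). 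In the odd case that gives $1+1=2\equiv 0$, so those two rows are orthogonal. In the even case we additionally have the identity block: rows $i$ and $j$ have their single nonzero entries of $\textbf{I}$ in \emph{different} columns (column $i$ versus column $j$), so the identity block contributes $0$ to the overlap, giving total overlap $1+1+0 = 2 \equiv 0$. Thus in both cases $\textbf{H}^{orth}_{EG,\ell}$ is self-orthogonal.

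I should also remark on why the two appended blocks are chosen as they are: the $\textbf{1}$-column is needed precisely to convert the single odd overlap contributed by $\textbf{H}^T$ into an even one and to fix the parity of each individual row when $\ell q$ is odd; when $\ell q$ is already even the row-weight parity needs a different fix, which the pair $(\textbf{1},\textbf{I})$ supplies — $\textbf{1}$ restores even pairwise overlap and $\textbf{I}$ restores even row weight without disturbing pairwise overlaps. The only point requiring a little care — and the main obstacle if any — is confirming that the cited \textbf{Type-II EG} intersection property (``any two rows have exactly one nonzero element in common'') genuinely holds for the relevant range of $\ell$, in particular at $\ell=(q^{m-1}-1)/(q-1)$ where the construction uses all non-origin lines; this follows because two distinct non-origin points of $EG(m,q)$ determine a unique line, and that line is non-origin and hence appears as one of the columns, contributing the unique shared position. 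Everything else is a parity bookkeeping argument of the kind sketched above.
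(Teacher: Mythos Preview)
Your proof is correct and follows essentially the same approach as the paper's: the paper's argument is simply the terse version of your parity bookkeeping, invoking the \textbf{Type-II EG} single-overlap property and then observing that appending $\textbf{1}$ (resp.\ $[\,\textbf{1}\mid\textbf{I}\,]$) makes both the pairwise overlaps and the individual row weights even. Your explicit separation into the diagonal and off-diagonal cases, and your flagging of the range-of-$\ell$ issue for the ``exactly one common position'' property, are refinements the paper does not spell out.
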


\begin{proof}
From the construction $\textbf{Type-II EG}$, any two different rows
intersect (overlap)  in exactly one nonzero position. If $\ell q$ is
odd, then adding the column vector \textbf{1} will result an even
overlap as well as rows of even weights. Therefore, the inner
product $\mod 2$ of any arbitrary rows vanishes. Also, if $\ell q$
is even, adding the columns $\Big[ \textbf{1} \mid \textbf{I} \Big]$
will produce row of even weights and the inner product $\mod 2$ of
any arbitrary rows vanishes.
\end{proof}

$\textbf{H}^{orth}_{EG,\ell}$ has size $n \times N$ for odd $\ell q$
where $n=q^m-1$, $N= n\ell +1$, and $1 \leq \ell \leq
(q^{(m-1)}-1)/(q-1)$. Also, it has length  $N=n(\ell+1)+1$ for even
$\ell q$.

%
The minimum distance of the LDPC codes constructed in this
type can be shown using the BCH bound as stated in the following
result.

\begin{lemma}
The minimum distance of an LDPC defined by the parity check matrix
$\textbf{H}^{orth}_{EG,\ell}$  is at least  $q+1$.
\end{lemma}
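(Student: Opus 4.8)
The plan is to re-run the column-counting argument that already yields minimum distance $\rho+1=q+1$ for the Type-II code defined by $\textbf{H}_{EG,\ell}^{T}$, and then to check that appending $\textbf{1}$ (and, when $\ell q$ is even, also $\textbf{I}$) introduces no codeword of weight $\le q$. First I would take a nonzero codeword $x$ in the null space of $\textbf{H}^{orth}_{EG,\ell}$ and partition its support $S$ along the block structure of the matrix: let $S_{0}$ collect the coordinates lying in the geometry columns (those coming from the $\textbf{H}_{i}^{T}$), let $\epsilon\in\{0,1\}$ record whether the appended all-ones column is in $S$, and, in the even case, let $S_{I}$ collect the coordinates in the appended identity block, so that $|S|=|S_{0}|+\epsilon+|S_{I}|$ (with $S_{I}=\emptyset$ for odd $\ell q$). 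Write $u=\sum_{j\in S_{0}}c_{j}\in\F_{2}^{n}$ for the mod-$2$ sum of the chosen geometry columns. From the Type-II properties every geometry column has weight exactly $q$ and any two distinct geometry columns meet in at most one coordinate, so $\textrm{wt}(u)\le q|S_{0}|$ and, by inclusion--exclusion modulo $2$, $\textrm{wt}(u)\ge q|S_{0}|-2\binom{|S_{0}|}{2}=|S_{0}|(q+1-|S_{0}|)$.

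The relation $\sum_{c\in S}c=\zero$ now reads $u=\epsilon\,\one$ (odd $\ell q$) or $u+\epsilon\,\one+\chi_{S_{I}}=\zero$ (even $\ell q$), where $\chi_{S_{I}}$ is the indicator vector of $S_{I}$; the proof is a short case analysis on $\epsilon$. If $S_{0}=\emptyset$ then $u=\zero$, so $\epsilon=0$ gives $x=\zero$ (excluded) and $\epsilon=1$ gives $\chi_{S_{I}}=\one$, i.e. $|S_{I}|=n$ and $|S|=n+1\ge q+1$. Assume now $S_{0}\neq\emptyset$. If $\epsilon=0$: in the odd case $u=\zero$, so the restriction of $x$ to the geometry coordinates is a nonzero codeword of the Type-II code and $|S|=|S_{0}|\ge q+1$; in the even case $\chi_{S_{I}}=u$, hence $|S|=|S_{0}|+\textrm{wt}(u)\ge|S_{0}|(q+2-|S_{0}|)$ when $|S_{0}|\le q$ (and $|S|\ge|S_{0}|\ge q+1$ when $|S_{0}|\ge q+1$), and the concave map $t\mapsto t(q+2-t)$ is $\ge\min\{q+1,\,2q\}=q+1$ on $[1,q]$. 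If $\epsilon=1$: in the odd case $u=\one$ forces $q|S_{0}|\ge\textrm{wt}(u)=n=q^{m}-1$, so $|S_{0}|\ge\lceil(q^{m}-1)/q\rceil=q^{m-1}\ge q$ and $|S|=|S_{0}|+1\ge q+1$; in the even case $\chi_{S_{I}}=u+\one$ gives $|S_{I}|=n-\textrm{wt}(u)\ge n-q|S_{0}|$, and then $|S|=|S_{0}|+|S_{I}|+1\ge q+1$ follows by splitting on $|S_{0}|\ge q$ (immediate) versus $|S_{0}|\le q-1$ (then $|S|\ge 1+(n-q(q-1))+1$, which is $\ge q+1$ since $n-q(q-1)=q^{m}-1-q^{2}+q\ge q-1$ because $m\ge2$).

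I expect the only genuinely delicate case to be $\epsilon=1$ with $S_{0}\neq\emptyset$ in the even-$\ell q$ construction: there the appended all-ones column overlaps each geometry column in all $q$ of its nonzero positions, so the overlap-at-most-one hypothesis fails for it and one cannot simply quote the generic $\rho+1$ bound; instead one has to trade the size of $S_{0}$ against that of $S_{I}$ through the two estimates on $\textrm{wt}(u)$. Everything else is bookkeeping: carrying the odd/even dichotomy throughout, checking the small-$|S_{0}|$ subcases separately, and repeatedly using $m\ge2$, i.e. $n=q^{m}-1\ge q^{2}-1\ge q$.
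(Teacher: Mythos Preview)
The paper does not actually prove this lemma: it only remarks, just before the statement, that the bound ``can be shown using the BCH bound'' and then states the result without further argument. Your proposal therefore takes a genuinely different route. Instead of invoking an algebraic BCH-type bound for the underlying cyclic/quasi-cyclic code, you give a direct combinatorial argument on the column structure of $\textbf{H}^{orth}_{EG,\ell}$, using the overlap-at-most-one property of the geometry columns together with the two weight estimates $|S_{0}|(q+1-|S_{0}|)\le\mathrm{wt}(u)\le q|S_{0}|$, and then tracking how the appended $\textbf{1}$ (and, in the even case, $\textbf{I}$) columns interact with them. The case analysis is correct; in particular you correctly isolate the one place where the generic $\rho+1$ argument fails (the all-ones column meets each geometry column in all $q$ of its nonzero positions, not in at most one) and repair it by trading $|S_{0}|$ against $|S_{I}|$. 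The payoff of your approach is that it is entirely self-contained from the incidence properties already recorded in the paper, whereas a BCH-bound proof would still have to be written out for the quasi-cyclic code extended by the $\textbf{1}$ and $\textbf{I}$ columns, something the paper never does.
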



\section{Quantum LDPC Block Codes}\label{sec:QLDPCcodes}

In this section we derive a family of LDPC stabilizer codes derived
from LDPC codes based on finite geometries.  Let $P=\{I,X,Z, Y=iXZ\}$ be a
set of Pauli matrices defined as
\begin{eqnarray} I=\left( \begin{array}{cc} 1 &0 \\0&1 \end{array}\right),
X=\left(
\begin{array}{cc} 0 &1 \\1&0 \end{array}\right), Z=\left( \begin{array}{cc} 1
&0 \\0&-1 \end{array}\right) \end{eqnarray}
and the matrix $Y$ is the combination of the matrices $X$ bit-flip
and $Z$ phase-flip defined as $Y=iXZ=\left(
\begin{array}{cc} 0 &-i
\\i&0
\end{array}\right)$. Clearly, $$X^2=Z^2=Y^2=I.$$

A well-known method to construct quantum codes is by using the
stabilizer formalism, see for
example~\cite{aly08thesis,calderbank98,gottesman97,macKay04} and references
therein. Assume we have a stabilizer group $S$ generated by a set
$\{S_1,S_2,\ldots,S_{n-k}\}$ such that every two row operators
commute with each other. The error operator $S_j$ is a tensor
product of $n$ Pauli matrices. $$S_j=E_1\otimes E_2\otimes \ldots
\otimes E_n, \hspace{0.3cm} E_i \in P.$$ $S_j$ can be seen as a
binary vector of length $2n$~\cite{macKay04,calderbank98}.  A
quantum code $Q$ is defined as +1 joint eigenstates of the
stabilizer $S$. Therefore, a codeword state $\ket{\psi}$ belongs to
the code $Q$ if and only if
\begin{eqnarray}S_j\ket{\psi}=\ket{\psi} \mbox{ for all } S_j \in
S.\end{eqnarray}

\textbf{CSS Construction:} Let $\textbf{G}$ and $\textbf{H}$ be two
binary matrices define the classical code $C$ and dual code
$C^\perp$, respectively. The CSS construction assumes that the
stabilizer subgroup (matrix) can be written as
\begin{eqnarray}
\textbf{S} = \left( \begin{array}{c|c} \textbf{H} & \textbf{0}
\\\textbf{0} &\textbf{G}
\end{array}\right)
\end{eqnarray}
where $\textbf{H}$ and $\textbf{G}$ are $k \times n$ matrixes
satisfying $\textbf{HG}^T=\textbf{0}$. The quantum code with
stabilizer $\textbf{S}$ is able to encode $n-2k$ logical qubits into
$n$ physical qubits. If $\textbf{G}=\textbf{H}$, then  the
self-orthgonality or dual-containing condition becomes
$\textbf{HH}^T=\textbf{0}$. If $C$ is a code that has a parity check
matrix $\textbf{H}$, then $C^\perp \subseteq C$.

 \textbf{Constructing Dual-containing LDPC Codes:} Let us
construct the stabilizer matrix
\begin{eqnarray}
S_{stab} = \Big( \begin{array}{c|c} H_{X} & 0 \\0 &H_{Z}
\end{array}\Big).
\end{eqnarray}

The matrix $\textbf{H}_{EG,\ell}^{orth}$ is a binary self-orthogonal
matrix as shown in Section~\ref{sec:LDPCcodesorthogonal}. We replace
every nonzero element in $\textbf{H}_{EG,\ell}^{orth}$ by the Pauli
matrix $X$ to form the matrix $H_X$. Similarly, we replace every
nonzero element in $\textbf{H}_{EG,\ell}^{orth}$ by the Pauli matrix
$Z$ to form the matrix $H_Z$. Therefore the matrix $S_{stab}$ is
also self-orthogonal. We can assume that the matrix $H_X$ corrects
the bit-flip errors, while the matrix $H_Z$ corrects the phase-flip
errors, see~\cite{macKay04,aly08thesis}.
\begin{lemma}\label{def:qldpc}
A  quantum LDPC code $Q$ with rate $(n-2k)/n$ is a code whose
stabilizer matrix $S_{stab}$ of size $2k \times 2n$ has a pair
$(\rho,\lambda)$ where $\rho$ is the number of non-zero error
operators in a column and $\lambda$ is the number of non-zero error
operators in a row. Furthermore, $S_{stab}$ is constructed from a
binary self-orthogonal parity check matrix
$\textbf{H}_{EG,\ell}^{orth}$ of size $k \times n$.
\end{lemma}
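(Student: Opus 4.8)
The plan is to treat Lemma~\ref{def:qldpc} as essentially a bookkeeping/definition statement and verify that the stabilizer matrix $S_{stab}$ built from $\textbf{H}_{EG,\ell}^{orth}$ genuinely defines a quantum stabilizer code with the claimed parameters. First I would recall the CSS condition: a matrix $S_{stab}=\left(\begin{smallmatrix}H_X&0\\0&H_Z\end{smallmatrix}\right)$ defines a valid stabilizer group precisely when all generators commute, which in symplectic terms reduces to $H_XH_Z^T=\textbf{0}$ over $\F_2$. Since we set $H_X$ and $H_Z$ to be the same binary matrix $\textbf{H}_{EG,\ell}^{orth}$ with nonzero entries relabeled by $X$ and $Z$ respectively, the commutation requirement becomes exactly $\textbf{H}_{EG,\ell}^{orth}(\textbf{H}_{EG,\ell}^{orth})^T=\textbf{0}$, i.e.\ self-orthogonality of $\textbf{H}_{EG,\ell}^{orth}$. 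That is precisely the content of Lemma~\ref{lem:Hself-orthogonal}, so the commutation of the stabilizer generators is immediate.

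Next I would pin down the dimensions and the rate. The matrix $\textbf{H}_{EG,\ell}^{orth}$ has some size $k\times n$ (in the notation of the lemma, $k$ rows and $n$ columns), so $S_{stab}$ has $2k$ rows and $2n$ columns as stated. Each of the $k$ rows of $\textbf{H}_{EG,\ell}^{orth}$ contributes one $X$-type generator and one $Z$-type generator, giving $2k$ generators acting on $n$ qubits; the standard stabilizer-code count then yields $n-2k$ encoded qubits (one should note, as the paper's CSS paragraph already does, that this presumes the generators are independent, or else $k$ is interpreted as the rank), hence rate $(n-2k)/n$. For the $(\rho,\lambda)$ parameters I would observe that $S_{stab}$ is block-diagonal, so the number of nonzero Pauli entries in any column equals the column weight $\rho$ of $\textbf{H}_{EG,\ell}^{orth}$ and the number in any row equals its row weight $\lambda$; these are inherited directly from the Type-II EG construction (with the small adjustments from appending $\textbf{1}$, and $\textbf{I}$ in the even case), so $S_{stab}$ is indeed sparse with the advertised low weights.

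Finally, to justify calling $Q$ a \emph{quantum LDPC} code I would point out that $\rho$ stays fixed (equal to $q$, or $q+1$ after adding $\textbf{1}$) as $m$ grows, while the length grows much faster than $\lambda$, exactly the LDPC sparsity condition from the opening definition, now transported to the stabilizer setting. I expect the only real subtlety — and the one place the argument is not pure bookkeeping — to be the treatment of linear dependence among the rows of $\textbf{H}_{EG,\ell}^{orth}$: the Type-II EG rows are explicitly ``not necessarily independent,'' so the honest statement is that $Q$ encodes $n-2\,\mathrm{rank}(\textbf{H}_{EG,\ell}^{orth})$ qubits and the nominal rate $(n-2k)/n$ is a lower bound on the true rate. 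I would handle this by invoking the earlier remarks on the rank of $\textbf{H}^T_{EG,\ell}$ and stating the rate claim with that caveat; everything else follows from Lemma~\ref{lem:Hself-orthogonal} and the CSS construction recalled just above the statement.
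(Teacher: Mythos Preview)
Your proposal is correct, and in fact the paper gives no proof of this lemma at all: it is stated as a definition-cum-lemma and immediately used (together with Lemma~\ref{lem:Hself-orthogonal}) to prove Theorem~\ref{lem:qldpcparameters}. Your treatment --- verifying the CSS commutation condition via $\textbf{H}_{EG,\ell}^{orth}(\textbf{H}_{EG,\ell}^{orth})^T=\textbf{0}$, reading off the $2k\times 2n$ shape and the $(\rho,\lambda)$ weights from the block-diagonal structure, and flagging the rank caveat for the rate --- is exactly the right bookkeeping and is more explicit than anything the paper provides.
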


Using Lemma~\ref{def:qldpc} and LDPC codes given by the parity check
matrix $\textbf{H}_{EG,\ell}^{orth}$ as shown in
Section~\ref{sec:LDPCcodesorthogonal}, we can derive a class of
quantum LDPC codes as stated in the following Lemma. 

\begin{theorem}\label{lem:qldpcparameters}
 Let $\textbf{H}_{EG,\ell}^{orth}$ be a parity check matrix of an LDPC code based on $EG(m,q)$, where $n=q^m-1$ and $1\leq \ell \leq (q^{m-1}-1)/(q-1)$. Then,
there exists a quantum LDPC code $Q$ with parameters $[[N,N-2n,\geq
q+1]]_2$ where $N=\ell n+1$ for odd $\ell q$ and $N=(\ell +1)n+1$
for even $\ell q$.
\end{theorem}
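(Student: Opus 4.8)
The plan is to assemble the theorem from the pieces already established earlier in the paper, treating it essentially as a bookkeeping corollary of Lemma~\ref{lem:Hself-orthogonal}, the minimum-distance lemma, and the CSS construction. First I would invoke the CSS recipe described just above the statement: starting from the binary self-orthogonal matrix $\textbf{H}_{EG,\ell}^{orth}$ of size $n\times N$, form $H_X$ and $H_Z$ by substituting the Pauli matrices $X$ and $Z$ respectively for each nonzero entry, and place them in the block-diagonal stabilizer $S_{stab}=\left(\begin{array}{c|c} H_X & 0 \\ 0 & H_Z\end{array}\right)$. Since $\textbf{H}_{EG,\ell}^{orth}(\textbf{H}_{EG,\ell}^{orth})^T=\textbf{0}$ over $\F_2$ by Lemma~\ref{lem:Hself-orthogonal}, the rows of $S_{stab}$ mutually commute (the $X$-block and $Z$-block commute because their supports overlap in an even number of coordinates), so $S_{stab}$ generates a legitimate stabilizer group and defines a quantum code $Q$ on $N$ physical qubits.

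Next I would count parameters. The matrix $\textbf{H}_{EG,\ell}^{orth}$ has $n=q^m-1$ rows, so $S_{stab}$ contributes at most $2n$ independent stabilizer generators; hence $Q$ encodes at least $N-2n$ logical qubits, and following the paper's convention (the rows need not be independent, but the stated rate is $(N-2n)/N$) the code has parameters $[[N,N-2n,d]]_2$. The length $N$ is read directly off the two cases of the construction in Section~\ref{sec:LDPCcodesorthogonal}: $N=\ell n+1$ when $\ell q$ is odd (only the all-ones column is appended) and $N=(\ell+1)n+1$ when $\ell q$ is even (the all-ones column together with the $n\times n$ identity is appended, adding $n+1$ columns). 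For the distance, the second (unnumbered) lemma states that the classical LDPC code with parity check matrix $\textbf{H}_{EG,\ell}^{orth}$ has minimum distance at least $q+1$; since in a CSS code built from a single self-orthogonal matrix $\textbf{H}$ the quantum distance is at least the classical distance of $C=\ker\textbf{H}$ (codewords of $Q$ are cosets of $C^\perp$ inside $C$, and undetectable errors lie in $C\setminus C^\perp$, which still have weight $\ge d(C)$), we conclude $d(Q)\ge q+1$. Putting these together yields the claimed $[[N,N-2n,\ge q+1]]_2$.

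The one genuinely delicate point — and the step I expect to need the most care — is the distance bound, because the naive CSS distance argument only gives $d(Q)\ge \min\{d(C_X^\perp), d(C_Z^\perp)\}$ or, in the dual-containing case, $d(Q)\ge d(C)$ where $C=\ker\textbf{H}_{EG,\ell}^{orth}$, and one must be sure the BCH-type argument of the preceding lemma really applies to the \emph{augmented} matrix (with the appended $\textbf{1}$ and possibly $\textbf{I}$ columns) rather than to the bare $\textbf{H}_{EG,\ell}^T$. I would handle this by noting that appending columns to a parity check matrix can only increase (never decrease) the minimum distance of the associated code, so the $q+1$ bound for the code of $\textbf{H}^T_{EG,\ell}$ (which follows from $\rho=q$ columns summing to zero, cf.\ the rank argument given for the QC~LDPC code) transfers to $\textbf{H}_{EG,\ell}^{orth}$, and hence to $Q$. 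Everything else is a matter of citing Lemma~\ref{lem:Hself-orthogonal} for self-orthogonality, Lemma~\ref{def:qldpc} for the quantum-LDPC structure of $S_{stab}$, and the length formulas already derived, so the proof is short.
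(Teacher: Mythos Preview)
Your overall approach matches the paper's: the published proof is literally two sentences, citing Lemma~\ref{lem:Hself-orthogonal} for self-orthogonality and Lemma~\ref{def:qldpc} for the existence of the quantum code, with the length $N$ and the distance bound $d\ge q+1$ implicitly inherited from the discussion in Section~\ref{sec:LDPCcodesorthogonal} and the preceding (unnumbered) minimum-distance lemma. Your expanded bookkeeping for the CSS commutation, the length $N$ in each parity case, and the dimension $N-2n$ is correct and more explicit than what the paper actually writes.

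There is, however, one genuine error in your auxiliary remark. Appending columns to a parity check matrix can only \emph{decrease} (or preserve), never increase, the minimum distance of its null space: if $\textbf{H}'=[\textbf{H}\mid A]$ then $(\ker\textbf{H})\times\{\mathbf{0}\}\subseteq\ker\textbf{H}'$, so $d(\ker\textbf{H}')\le d(\ker\textbf{H})$. Concretely, in the even-$\ell q$ case the appended identity block contributes columns of weight~$1$, so the column-weight/rank argument you quote would only yield $d\ge 2$, not $q+1$. This does not actually break your proof, because the distance lemma you cite is already stated for the \emph{augmented} matrix $\textbf{H}_{EG,\ell}^{orth}$ itself (and the paper appeals to the BCH bound, not the column-weight bound, for it). So the fix is simply to cite that lemma as given and delete the attempted reduction to $\textbf{H}_{EG,\ell}^T$.
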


\begin{proof}
By Lemma~\ref{lem:Hself-orthogonal}, $\textbf{H}_{EG,\ell}^{orth}$
is self-orthogonal. Using Lemma~\ref{def:qldpc}, there exists a
quantum LDPC code with the given parameters.
\end{proof}

\section{Conclusion}
We constructed a class of quantum LDPC codes derived from finite
geometries. The constructed codes have high rates and their minimum
distances are bounded. They only have one cycle of length four between any two rows and many cycles of length of at least six.  A new class of
quantum LDPC codes based on projective geometries can be driven in a
similar way.

 \smallskip

 \scriptsize

\bibliographystyle{ieeetr}

\end{document}